\title{Computation with narrow CTCs\thanks{This 
work was partially supported by the Scientific and Technological
Research Council of Turkey (T\"{U}B\.ITAK) with grant 108E142.}}
\author{A. C. Cem Say\ \and  Abuzer Yakary{\i}lmaz}
\institute{Bo\u{g}azi\c{c}i University, Department of Computer Engineering,\\ Bebek 34342 \.{I}stanbul, Turkey \\
\email{{say,abuzer}@boun.edu.tr}
 \\~~\\
\today
}
\begin{document}

\newlength{\twidth}
\maketitle
\pagenumbering{arabic}

\begin{abstract} \label{abstract:Abstract}
We examine  some variants of computation with closed timelike curves (CTCs), where various restrictions are imposed on the memory  of the computer,  and the information carrying capacity and range of the CTC. We give full characterizations of the classes of languages recognized by polynomial time probabilistic and quantum computers that can send a single classical bit to their own past. Such narrow CTCs are demonstrated to add the power of limited nondeterminism to deterministic computers, and lead to exponential speedup in constant-space probabilistic and quantum computation. We show that, given a time machine with constant negative delay, one can implement CTC-based computations without the need to know about the runtime beforehand.
\end{abstract}

\section{Introduction} \label{section:Introduction}
It is known \cite{AW09} that adding the capability of sending a polynomial number of bits through a closed timelike curve (CTC) to the past, so that it can be used as part of the input, to models as weak as constant-depth, polynomial-size Boolean circuits increases their computational power significantly, to match that of polynomial-space Turing machines. Interestingly, adding the same capability to a polynomial-time quantum computer results once again in the ability to solve precisely the problems in $\mathsf{PSPACE}$, leading to Aaronson and Watrous' conclusion \cite{AW09} that ``CTCs make quantum and classical computing equivalent."

Since the information carrying capacity (\textit{width}) of any CTC is finite, and the cost of building such a channel to the past may depend on its width critically, it is important to examine the power of computational models with ``narrow" CTCs, i.e. those which are restricted to use a single-bit channel, regardless of the size of the input. The study of narrow CTCs \cite{B04,AW09} has focused on polynomial-time computers as the core model until now, with results showing that classical computers augmented with narrow CTCs can recognize any language in $\mathsf{BPP_{path}}$, whereas quantum computers with this capability can solve all problems in $\mathsf{PP}$. Since it is not known whether any of the containments in $\mathsf{P} \subseteq \mathsf{NP} \subseteq \mathsf{BPP_{path}} \subseteq \mathsf{PP} \subseteq \mathsf{PSPACE}$ is proper, we cannot presently say whether narrow CTCs are ``useful" in these setups at all, and if so, whether they confer the same amount of power to classical and quantum models.

In this paper, we will complement the results mentioned above to provide full characterizations of the classes of languages recognized by probabilistic and quantum computers with narrow CTCs. It turns out that a CTC with a single classical bit provides a computer with precisely the power of postselection \cite{Aa05}.

The effects of restrictions on the core model of computation to be used inside the CTC on the power of the resulting setup were posed as open questions in \cite{AW09}. In this regard, we consider real-time probabilistic and quantum finite automata (PFAs and QFAs), as well as several deterministic automaton models, as possible core models to be augmented with a narrow CTC. We show that narrow CTCs add  the powers of limited nondeterminism or a certain kind of two-wayness to real-time automata. We are therefore able to prove that real-time PFAs, QFAs and deterministic pushdown automata (DPDAs) with narrow CTCs are strictly more powerful than their standard counterparts, whereas this addition does not increase the power of one-way deterministic finite automata (DFAs), or any deterministic two-way model. We also show that QFAs with narrow CTCs outperform their probabilistic counterparts, in contrast to the Aaronson-Watrous result on polynomial-time computation.

The rest of this paper is structured as follows. Section
\ref{section:Preliminaries} summarizes previous work on the use of
CTCs in computation. We define an alternative model of augmenting
computers with CTCs in Section \ref{section:ourmodel}. Section
\ref{section:pfaqfactc} establishes the equivalence of the powers of
postselection and narrow CTCs. In \ref{section:detctc}, we examine the
effects of endowing weak deterministic machines with narrow CTCs.
In Section
\ref{section:feasibility}, we demonstrate how our model of Section
\ref{section:ourmodel} can be implemented with time machines with
constant negative delay. Section \ref{section:Conclusion} is a conclusion.

\section{Preliminaries} \label{section:Preliminaries}
The existence of CTCs does  not seem to be incompatible with the best available theories of spacetime. To cite just one example, an influential paper \cite{MTY88} by Morris, Thorne, and Yurtsever describes how a technologically advanced civilization could first create a ``wormhole", and then transform it to a ``time machine" that can be used to send messages, or even people, backwards in time. The  time machines of \cite{MTY88} cannot be used to send anything to a time before their date of construction, say, $d$, and have a constant ``range" of, say, $T$ seconds, (determined by their builder at the beginning,) such that a message sent at any time point $t$ ($t>d+T$) is received at time point $t-T$. Once built, the machine can be used as many times as one wishes for such transmissions.

As noted by many authors of science fiction, a major problem with time travel is the ``Grandfather Paradox," where a time traveler from the future prevents himself from traveling in the first place, leading to confusion about the state of the universe at the presumed time, say, $A$, of his arrival: He arrives if and only if he does not arrive. It was thought that Nature would prevent this logical inconsistency by simply not allowing time travel scenarios of that kind to be realized. Note that this argument assumes that the universe is supposed to be in exactly one, deterministic state, at all times. Probabilistic and quantum theories do not include this restriction, and David Deutsch \cite{De91} showed that time travels to the past, including the above-mentioned scenario, would not lead to such problems if one just assumes that Nature imposes a \textit{causal consistency condition} that the state $x$ of the universe in the critical moment should be a fixed point of the operator $f$ describing the evolution in the CTC, i.e. that $x=f(x)$. In the Grandfather Paradox scenario, Nature would ``set" the state of the universe at time $A$ to a distribution where the traveler arrives with probability $\frac{1}{2}$ to keep things consistent, as a ``response" to the self-preventation action of the traveler. 

As Deutsch noted, a computer which sends part of its output back in time to  be used as part of its input can solve many computational problems much faster than what is believed to be possible without such loops. Bacon \cite{B04} showed that $\mathsf{NP}$-complete problems can be solved by polynomial-time computers with narrow CTCs. Aaronson and Watrous \cite{AW09} proved, as mentioned in the introduction, that 
$\mathsf{AC_{CTC}^{0}} = \mathsf{P_{CTC}} = \mathsf{BQP_{CTC}} = \mathsf{PSPACE_{CTC}} = \mathsf{PSPACE} $, where the subscript $\mathsf{CTC}$ under a class name indicates that the related machines have been reinforced by polynomial-width CTCs.

Let us review Aaronson and Watrous' model of quantum computation\footnote{The variant where the core model is classical is defined similarly.} with CTCs from \cite{AW09}: A deterministic polynomial-time algorithm $\mathcal{A}$ takes an input $w$, and prints the description of a quantum circuit $Q_{w}$ with rational amplitudes. $Q_{w}$ acts on two registers of polynomial size. One register holds the information that is sent from the end of the computation in the future through the CTC, whereas the other one is a standard causality-respecting register, including a bit that will be used to report the output of the computation. The circuit is executed, with the CTC register set by Nature to some state satisfying the causal consistency condition described above, and the causality-respecting register initialized to all zeros, and the result is read off the output bit. A language $L$ is said to be decided by such a CTC algorithm
$\mathcal{A}$ if all members of $L$ are accepted with high probability, and all nonmembers are rejected with high probability.

Note that this setup necessitates $\mathcal{A}$ to build a new CTC of the appropriate width for each different input $w$. This forces one \cite{De91} to take the cost of this construction into account when analyzing the complexity, and the resources required may well scale exponentially in the width of the CTC.\footnote{David Deutsch, personal communication.} The study of narrow CTCs, where this cost does not depend on the input length, is thus motivated.

\section{Our model} \label{section:ourmodel}
We will be considering several  computation models that are augmented with the capability of sending a single classical bit of information from the time of the end of their execution back to the beginning.

We define a machine of type $ \mathrm{M_{CTC_1}} $  as simply a machine of type $ \mathrm{M} $,\footnote{We will examine DFAs, PFAs, QFAs, both real-time and two-way DPDAs, time-bounded probabilistic and quantum Turing machines (PTMs and QTMs), and space-bounded deterministic Turing machines as core models. See \cite{GHI67,Si06,YS11A} for the standard definitions of these models.} which has access to an additional bit in a so-called \textit{CTC cell}. The CTC cell obtains its ``initial" distribution from the future, according to the causal consistency condition. The program format of an $ \mathrm{M_{CTC_1}} $ differs from that of an $ \mathrm{M} $ so that it specifies the transitions to be performed for all possible combinations of not only the input symbol, internal state, etc., but also the CTC cell value. The set $S$ of internal states of an $ \mathrm{M_{CTC_1}} $ is defined as the union of three disjoint sets $S_{n}$, $S_{p_{0}}$, and $S_{p_{1}}$. The states in $S_{n}$ are of the standard variety. When  they are entered, the states in $S_{p_{1}}$ ($S_{p_{0}}$) cause a 1 (0) to be sent back in time to be assigned to the CTC cell at the start of the execution. We assume that states in $S_{p_{1}} \cup S_{p_{0}}$ are entered only at the end of execution, (for real-time models, this is precisely when the machine is reading the end-marker symbol), and all states entered at that time are in $S_{p_{1}} \cup S_{p_{0}}$. Any number of members of $S$ can be designated as accept states. The input string $w$ is accepted if, for all stationary distributions of the evolution of the CTC bit induced by $w$, the machine accepts with probability at least $\frac{2}{3}$ with the CTC cell starting at that distribution. A string $w$ is rejected  if, for all stationary distributions of the evolution of the CTC bit induced by $w$, the machine rejects with probability at least $\frac{2}{3}$ with the CTC cell starting at that distribution. A language is recognized if all its members are accepted and all its nonmembers are rejected.

It is evident that any language recognized by any $ \mathrm{M_{CTC_1}} $ according to our definition is also decided by some CTC algorithm \textit{\'{a} la} the Aaronson-Watrous definition, described in Section \ref{section:Preliminaries}.
The motivation for the difference between the definitions is that the weakness of some of the core models we will use precludes us from performing any processing before using the CTC, and calculating or bounding the runtime, which determines the required ``range" of the CTC beforehand. For more on this issue, see Section \ref{section:feasibility}.

\section{Postselection and narrow CTCs} \label{section:pfaqfactc}

An important tool in the analysis of the capabilities of computers with narrow CTCs is the observation that one CTC bit endows any probabilistic or quantum core model with the power of postselection \cite{Aa05}. This fact is already known, but since we have not seen it stated explicitly anywhere, we present a demonstration of it below.

Postselection is the capability
of discarding all branches of a computation in which a specific event
does not occur, and focusing on the surviving branches for the final
decision about the membership of the input string in the recognized
language. A formal definition of polynomial time computation with postselection can be found in \cite{Aa05}, where it was proven that $\mathsf{PostBQP}$, the class of languages recognized by polynomial-time QTMs with postselection,  equals the class $\mathsf{PP}$. In the following, we present the analogous definition in the context of real-time computation with constant space.

A PFA (resp. QFA) $\mathcal{A}$ with postselection \cite{YS11B} is simply an ordinary PFA (QFA)  whose state set is partitioned into the sets of postselection accept, postselection reject, and nonpostselection states, and satisfies the condition that the probability that $\mathcal{A}$ will be in at least one postselection (accept or reject) state at the end of the processing is nonzero for all possible input strings. The overall acceptance and rejection probabilities of any input string $w$ ($P_{\mathcal{A}}^{a}(w)$ and $P_{\mathcal{A}}^{r}(w)$, respectively) are calculated
by simply discarding the computational paths ending at nonpostselection states, and performing a normalization so that the decision about the input is given by the postselection states:

\begin{equation}
\label{eq:postacc-postrej}
       P_{\mathcal{A}}^{a}(w) =
	\dfrac{p_{\mathcal{A}}^{a}(w)}{p_{\mathcal{A}}^{a}(w)+p_{\mathcal{A}}^{r}(w)}
	~~
	\mbox{ and }
	~~
	P_{\mathcal{A}}^{r}(w) =
	\dfrac{p_{\mathcal{A}}^{r}(w)}{p_{\mathcal{A}}^{a}(w)+p_{\mathcal{A}}^{r}(w)},
\end{equation}
where $p_{\mathcal{A}}^{a}(w)$ and $p_{\mathcal{A}}^{r}(w)$, respectively, are the acceptance and rejection probabilities of $w$ before the normalization.

\begin{lemma} \label{lemma:postsel}
Any language that can be recognized by a real-time automaton of type $ \mathrm{M} \in$ \{PFA,QFA\} with postselection can be recognized by an $ \mathrm{M_{CTC_1}} $.
\end{lemma}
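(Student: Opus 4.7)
The plan is, given a postselection PFA or QFA $\mathcal{A}$ recognizing $L$, to build an $\mathrm{M_{CTC_1}}$ machine $\mathcal{A}'$ whose single CTC bit $b$ is forced by the causal consistency condition to settle on the postselected verdict. I double the state set of $\mathcal{A}$ so that $\mathcal{A}'$ reads $b$ on its first transition, copies it into the second coordinate of the state, and then faithfully simulates $\mathcal{A}$ through the body of the input without re-consulting the CTC cell. On the right end-marker, $\mathcal{A}'$ inspects the pair (final $\mathcal{A}$-state, $b$) and branches as follows: if the $\mathcal{A}$-state is a postselection-accept state, enter an accepting state of $S_{p_1}$; if it is a postselection-reject state, enter a rejecting state of $S_{p_0}$; if it is a nonpostselection state, enter a state of $S_{p_b}$ that is accepting iff $b=1$. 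In this last case the CTC cell is looped back to its own initial value and the machine's output simply mirrors that value.

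Writing $a=p^{a}_{\mathcal{A}}(w)$ and $r=p^{r}_{\mathcal{A}}(w)$ for the pre-normalization acceptance and rejection probabilities of $\mathcal{A}$ on input $w$, this design induces a $2\times 2$ column-stochastic transition on the CTC bit, namely
\[
T \;=\; \begin{pmatrix} 1-a & r \\ a & 1-r \end{pmatrix},
\]
whose eigenvalues are $1$ and $1-(a+r)$. A routine calculation yields the stationary distribution $\bigl(r/(a+r),\,a/(a+r)\bigr)$. Tracing through the end-step prescription, $\mathcal{A}'$ started from this stationary distribution accepts with probability exactly $a/(a+r)=P^{a}_{\mathcal{A}}(w)$ and rejects with probability $r/(a+r)=P^{r}_{\mathcal{A}}(w)$, so the $2/3$ thresholds on $\mathcal{A}$ transfer to $\mathcal{A}'$ without loss.

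The step that needs the most care is the ``for all stationary distributions'' clause in the definition of $\mathrm{M_{CTC_1}}$-acceptance. Here the postselection hypothesis pays off: the definition of a postselection automaton requires the total postselection probability $a+r$ to be strictly positive on every input, so the non-unit eigenvalue satisfies $|1-(a+r)|<1$ and the stationary distribution of $T$ is unique, collapsing the universal quantifier to the single distribution analyzed above. The quantum case introduces no extra difficulty, since $\mathcal{A}'$ conditions on $b$ only at the first and last transitions: between these, the CTC bit sits as a classical ancilla tensored with the quantum state (the body unitaries act as $U\otimes I$ on the enlarged space), so the induced map on $b$ is the genuine classical stochastic matrix $T$ and the same eigenvalue argument applies.
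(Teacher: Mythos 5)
Your proposal is correct and follows essentially the same route as the paper's proof: simulate $\mathcal{A}$, defer to the CTC bit when a nonpostselection state is reached, send the decision to the past, and observe that the induced column-stochastic matrix $\left(\begin{smallmatrix} 1-a & r \\ a & 1-r\end{smallmatrix}\right)$ has the unique stationary distribution $\bigl(r/(a+r),\,a/(a+r)\bigr)$, reproducing the postselected acceptance probability exactly. Your added remarks on why $a+r>0$ forces uniqueness of the stationary distribution and on the CTC bit being a classical ancilla in the quantum case are details the paper leaves implicit, not a different argument.
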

\begin{proof} 
Let $\mathcal{A}$ be the given automaton with postselection. We construct an  $ \mathrm{M_{CTC_1}} $ $ \mathcal{A'} $. 
The values 1 and 0 of the CTC bit are associated with acceptance and rejection, as explained below.  $ \mathcal{A'} $ imitates the behavior of $\mathcal{A}$ until the end of the input. If $\mathcal{A}$ ends up at a postselection state, $ \mathcal{A'} $ halts with the same decision as $\mathcal{A}$ at that state. If $\mathcal{A}$ ends up at a nonpostselection state, $ \mathcal{A'} $ simply reports the value it sees in the CTC bit as its decision. $ \mathcal{A'} $ sends the value associated with its decision to the past as it halts.

The evolution of the CTC bit of $ \mathcal{A'} $ for input $w$ is described by the column stochastic matrix
\begin{equation}
	\left( \begin{array}{cc} 
		1-p_{\mathcal{A}}^{a}(w) & p_{\mathcal{A}}^{r}(w) \\
		p_{\mathcal{A}}^{a}(w) & 1-p_{\mathcal{A}}^{r}(w) 
		\end{array} \right),
\end{equation}
whose only stationary distribution is 
\begin{equation}
	\left( \begin{array}{c}
		\frac{p_{\mathcal{A}}^{r}(w)}{p_{\mathcal{A}}^{a}(w)+p_{\mathcal{A}}^{r}(w)} \\
		\frac{p_{\mathcal{A}}^{a}(w)}{p_{\mathcal{A}}^{a}(w)+p_{\mathcal{A}}^{r}(w)}
	\end{array} \right),
\end{equation}where the first and second entries stand for the probabilities of the values 0 and 1, respectively, meaning that $ \mathcal{A'} $ recognizes $\mathcal{A}$'s language with the same error probability.
\qed\end{proof}

Let $\mathsf{BPP_{CTC_1}}$ and $\mathsf{BQP_{CTC_1}}$ denote the classes of languages recognized by polynomial time PTMs and QTMs with narrow CTCs using classical bits, respectively.\footnote{Our definition of $\mathsf{BQP_{CTC_1}}$ is different from the $\mathsf{BQP_{CTC1}}$ given in \cite{AW09}, since Aaronson and Watrous consider a quantum bit sent through the CTC.} $\mathsf{BPP_{path}}$ is the class of languages recognized by polynomial-time PTMs with postselection.

The results $\mathsf{BPP_{path}} \subseteq \mathsf{BPP_{CTC_1}}$ and $\mathsf{PP} \subseteq \mathsf{BQP_{CTC_1}}$, that we alluded to in the introduction, are obtained using the link described above between CTCs and postselection.

We now present our main result that the power of postselection is all that a narrow CTC can confer on a computer. Let pPTM and pQTM denote polynomial-time PTMs and QTMs, respectively.

\begin{lemma} \label{lemma:selpos}
Any language that can be recognized by an $ \mathrm{M_{CTC_1}} $, where $ \mathrm{M} \in$ \{PFA,QFA,pPTM,pQTM\}, can be recognized by a machine of type $ \mathrm{M} $ with postselection.
\end{lemma}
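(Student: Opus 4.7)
The plan is to build, given an $\mathrm{M_{CTC_1}}$ machine $\mathcal{A}$, a postselection machine $\mathcal{A}'$ whose normalised acceptance probability coincides with $\mathcal{A}$'s CTC acceptance probability, by using postselection to sample the CTC bit from the stationary distribution of $\mathcal{A}$'s CTC evolution matrix. Let $p_{ij}$ denote the probability that $\mathcal{A}$ started with CTC bit $i$ ends in a state of $S_{p_{j}}$, and $\alpha_i$ the probability that $\mathcal{A}$ accepts given CTC start $i$. When $p_{01}+p_{10}>0$ the evolution matrix has the unique stationary distribution $\pi_i=p_{1-i,i}/(p_{01}+p_{10})$, and the CTC acceptance probability of $\mathcal{A}$ is $\pi_0\alpha_0+\pi_1\alpha_1$.

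The construction will have $\mathcal{A}'$ run two independent copies of $\mathcal{A}$ in parallel on the same input, Copy~$0$ initialised with CTC bit $0$ and Copy~$1$ with CTC bit $1$, and additionally sample an extra uniform bit~$b$. For PFAs and QFAs this is a product/tensor state space of constant blow-up; for pPTMs and pQTMs it is a pair of polynomial-time simulations with independent randomness. Let $B_i\in\{0,1\}$ be the bit Copy~$i$ would send back, determined by which of $S_{p_0}$ or $S_{p_1}$ contains its final state. A computational path is designated a postselection accept (resp.\ reject) state iff $B_{1-b}=b$ and Copy~$b$ is in an accept (resp.\ reject) state of $\mathcal{A}$, and a nonpostselection state otherwise. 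The analysis is a direct Bayes computation: the postselection event $B_{1-b}=b$ has probability $\tfrac{1}{2}(p_{01}+p_{10})$, and conditioning on it makes $b$ distributed as $\pi$, since $\Pr[b=i\mid\mathrm{postsel}]=p_{1-i,i}/(p_{01}+p_{10})=\pi_i$. Because the two copies use independent randomness, conditioning on $B_{1-b}=b$ (an event about Copy~$1-b$) leaves the distribution of Copy~$b$'s final state unchanged, so the normalised acceptance probability of $\mathcal{A}'$ equals $\pi_0\alpha_0+\pi_1\alpha_1$, matching $\mathcal{A}$'s CTC acceptance probability. The same identity covers rejection, so $\mathcal{A}'$ recognises $L(\mathcal{A})$.

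The main obstacle is the degenerate case $p_{01}=p_{10}=0$, in which the postselection probability above is zero and $\mathcal{A}'$ is not a legal postselection machine. In that case the CTC evolution matrix is the identity, every distribution on $\{0,1\}$ is stationary, and CTC-acceptance forces both $\alpha_0\geq 2/3$ and $\alpha_1\geq 2/3$. I would address this by having $\mathcal{A}'$ also run, with some small constant probability, a fallback branch that simply guesses the CTC bit, simulates $\mathcal{A}$, and declares an unconditional postselection accept or reject according to $\mathcal{A}$'s own verdict; in the identity case this fallback is the only postselection contribution and its acceptance probability $(\alpha_0+\alpha_1)/2\geq 2/3$ is correct. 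Verifying that the mixing of the main and fallback schemes preserves the $2/3$ threshold on all inputs, in particular on inputs where $p_{01}+p_{10}$ is positive but very small, will be the most delicate point, and will likely require first amplifying $\mathcal{A}$ so that the gap absorbs the small distortion introduced by the fallback.
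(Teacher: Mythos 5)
Your main construction is essentially the paper's own proof in different clothing: the paper builds $\mathcal{A}_0$ and $\mathcal{A}_1$ by fixing the CTC bit, forms the products $\mathcal{A}_{10}$ and $\mathcal{A}_{01}$ (sequencing for the polynomial-time machines, tensoring for the real-time ones), branches between them with probability $\frac{1}{2}$ each, and postselects on the ``other'' copy having sent back the bit naming the branch; your coin $b$ together with the event $B_{1-b}=b$ and the decision read off Copy~$b$ is exactly this, and your Bayes computation reproduces the paper's Equations (6)--(8). That part is correct and needs no further comment.

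Where you diverge is the degenerate case $p_{01}=p_{10}=0$, which the paper silently ignores (its stationary-distribution formula divides by $p_{01}+p_{10}$). You are right that the event being postselected then has probability zero, so the construction is not a legal postselection machine on such inputs. However, your proposed fallback does not work as described, and the difficulty is worse than ``delicate.'' Consider an input with $p_{01}=0$ and $p_{10}=\epsilon>0$ very small. The unique stationary distribution is then $\left(\begin{array}{c}1\\0\end{array}\right)$, so the recognition condition constrains only $\alpha_0$; the value $\alpha_1$ is completely unconstrained and may be adversarial (e.g.\ $\alpha_0=2/3$, $\alpha_1=0$ on a member of $L$). Your main scheme contributes postselection mass $\frac{1}{2}\epsilon$ with conditional acceptance $\alpha_0$ (correct), but the fallback contributes a constant mass $\delta$ with conditional acceptance $(\alpha_0+\alpha_1)/2$; for $\epsilon\ll\delta$ the fallback dominates and the normalized acceptance drops to roughly $(\alpha_0+\alpha_1)/2$, which can fall below $2/3$. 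Amplifying $\mathcal{A}$ cannot repair this, because amplification only sharpens quantities that the recognition condition already constrains, and $\alpha_1$ is not among them when the stationary weight on CTC value $1$ is zero. So any fix must avoid mixing in a branch whose conditional acceptance depends on the unconstrained $\alpha_{i}$; the discontinuity at $p_{01}+p_{10}=0$ cannot be smoothed over by a constant-probability fallback. Since the paper's proof has the identical gap, your attempt should be judged as matching the paper modulo an edge case that neither argument settles; but the specific patch you propose should be withdrawn or replaced.
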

\begin{proof} 
Let $ L $ be a language recognized by an $ \mathrm{M_{CTC_{1}}} $ named $ \mathcal{A} $. 
We start by constructing two machines of the standard type $ \mathrm{M} $, namely, $ \mathcal{A}_{0} $ and $ \mathcal{A}_{1} $, that simulate the computation of $ \mathcal{A} $ by fixing 0 and 1 for the value of the CTC bit, respectively.

Let $ p_{\mathcal{A}_{i}}^{j} (w) $ denote the probability that $ \mathcal{A}_{i} $ will reach a configuration corresponding to sending the bit $j$ to the past at the end of its computation when started on input $w$, where $ i,j \in \{0,1\} $. The CTC bit's evolution is described by

\begin{equation}
	\left( \begin{array}{cc} 
		1- p_{\mathcal{A}_{0}}^{1} (w)  &  p_{\mathcal{A}_{1}}^{0} (w)  \\
		 p_{\mathcal{A}_{0}}^{1} (w)  & 1- p_{\mathcal{A}_{1}}^{0} (w)  
		\end{array} \right),
\end{equation}
with stationary distribution
\begin{equation}
	\left( \begin{array}{c}
		\frac{ p_{ \mathcal{A}_{1} }^{0}(w) }{p_{ \mathcal{A}_{0} }^{1}(w) + 
	p_{ \mathcal{A}_{1} }^{0}(w)} \\
		\frac{ p_{ \mathcal{A}_{0} }^{1}(w) }{p_{ \mathcal{A}_{0} }^{1}(w) + 
	p_{ \mathcal{A}_{1} }^{0}(w)}
	\end{array} \right).
\end{equation}

For $ w \in L $, we therefore have the  inequality
\begin{equation}
	\label{eq:overall-acc}
	\frac{ p_{ \mathcal{A}_{1} }^{0}(w) }{p_{ \mathcal{A}_{0} }^{1}(w) + 
	p_{ \mathcal{A}_{1} }^{0}(w)}
	p_{ \mathcal{A}_{0} }^{a}(w) +
	\frac{ p_{ \mathcal{A}_{0} }^{1}(w) }{p_{ \mathcal{A}_{0} }^{1}(w) + 
	p_{ \mathcal{A}_{1} }^{0}(w)}
	p_{ \mathcal{A}_{1} }^{a}(w)
	\geq \frac{2}{3}.
\end{equation}

We claim that, for all machine types $ \mathrm{M} $ mentioned in the theorem statement, one can construct an instance of $ \mathrm{M} $, say, $ \mathcal{A}' $, which will have two mutually exclusive collections of states, say, $ S_{a} $ and $ S_{r} $, such that the probability that $ \mathcal{A}' $ halts in $ S_{a} $ when started on input $w$ is
\begin{equation}
 \label{eq:msacc}
	\frac{1}{2} \left( p_{ \mathcal{A}_{1} }^{0}(w) p_{ \mathcal{A}_{0} }^{a}(w) +
	p_{ \mathcal{A}_{0} }^{1}(w) p_{ \mathcal{A}_{1} }^{a}(w) \right),
\end{equation}
and the probability that $ \mathcal{A}' $ halts in $ S_{r} $ is
\begin{equation}
	\label{eq:msrej}
	\frac{1}{2} \left( p_{ \mathcal{A}_{1} }^{0}(w) \left( 1 - p_{ \mathcal{A}_{0} }^{a}(w) \right) +
	p_{ \mathcal{A}_{0} }^{1}(w) \left( 1 - p_{ \mathcal{A}_{1} }^{a}(w) \right) \right).
\end{equation}
For instance, if $ \mathrm{M} $=pPTM, we first build two pPTMs, say, $ \mathcal{A}_{10} $ and $ \mathcal{A}_{01} $, for handling the two operands of the addition in Equation \ref{eq:msacc} by sequencing $ \mathcal{A}_{0} $ and $ \mathcal{A}_{1} $ to run on the input in the two different possible orders. $ \mathcal{A}' $ simply runs $ \mathcal{A}_{10} $ and $ \mathcal{A}_{01} $, with probability $\frac{1}{2}$ each. Equation \ref{eq:msrej} is handled similarly.

If $ \mathcal{A} $ is a real-time machine, the sequential processing described above is not permitted, and we instead perform tensor products of $ \mathcal{A}_{0} $ and $ \mathcal{A}_{1} $ to obtain the submachines $ \mathcal{A}_{10} $ and $ \mathcal{A}_{01} $.

Once $ \mathcal{A}' $ is completed, we view it as a machine with postselection, by postselecting on the states in $ S_{a} \cup S_{r} $ being reached at the end. $ S_{a} $ will be designated to be the set of accept states. $ S_{r} $ will be the reject states. 

Performing the normalization described in Equation \ref{eq:postacc-postrej} on Equations \ref{eq:msacc} and \ref{eq:msrej} to calculate $ \mathcal{A}' $'s probability of acceptance, one obtains precisely  the expression depicted in Equation \ref{eq:overall-acc}.
The case of $ w \notin L $ is symmetric. We conclude that $ \mathcal{A}' $ recognizes $L$ with exactly the same error probability as $ \mathcal{A} $.
\qed\end{proof}

We have therefore proven
\begin{theorem}
For any $ \mathrm{M} \in$ \{PFA,QFA,pPTM,pQTM\}, $ \mathrm{M_{CTC_1}} $ is equivalent in language recognition power to a machine of type $ \mathrm{M} $ with postselection.
\end{theorem}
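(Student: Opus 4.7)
The plan is to derive the theorem directly from Lemmas \ref{lemma:postsel} and \ref{lemma:selpos}, after extending the former to cover the two polynomial-time cases. Lemma \ref{lemma:selpos} already establishes, for every $ \mathrm{M} \in $ \{PFA, QFA, pPTM, pQTM\}, the inclusion ``$ \mathrm{M_{CTC_1}} $ is no more powerful than $ \mathrm{M} $ with postselection.'' What remains is the reverse inclusion (postselection $\subseteq$ CTC) for each of the four machine types.

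For $\mathrm{M} \in $ \{PFA, QFA\} this is exactly the content of Lemma \ref{lemma:postsel}. For $\mathrm{M} \in $ \{pPTM, pQTM\} I would reuse that lemma's construction essentially verbatim: given an $\mathrm{M}$-machine $\mathcal{A}$ with postselection, construct an $\mathrm{M_{CTC_1}}$-machine $\mathcal{A}'$ that simulates $\mathcal{A}$ step-for-step, and at the moment $\mathcal{A}$ would halt behaves as follows. If $\mathcal{A}$ is in a postselection accept (resp. reject) state, $\mathcal{A}'$ accepts (resp. rejects) and sends $1$ (resp. $0$) back through the CTC; if $\mathcal{A}$ is in a nonpostselection state, $\mathcal{A}'$ reports whatever value it read from the CTC cell and sends that same value back. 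The column stochastic matrix describing the CTC bit's evolution, and its unique stationary distribution, are exactly those appearing in the proof of Lemma \ref{lemma:postsel}, so $\mathcal{A}'$ recognizes the same language with the same error probability.

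The only point that needs checking is that this transformation respects the resource constraints defining pPTMs and pQTMs. This is immediate: the wrapping around $\mathcal{A}$ incurs only $O(1)$ extra work per step, so $\mathcal{A}'$ still runs in polynomial time, and reading/writing a single CTC bit fits within the model definition given in Section \ref{section:ourmodel}. I therefore do not anticipate any real obstacle; the substantive work — analyzing the stationary distribution of the two-state Markov chain induced on the CTC cell — has already been done inside the two lemmas, and assembling them yields the equivalence in both directions.
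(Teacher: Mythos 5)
Your proposal is correct and follows essentially the same route as the paper, which states the theorem as an immediate consequence of Lemmas \ref{lemma:postsel} and \ref{lemma:selpos}; the paper likewise relies on the Lemma \ref{lemma:postsel} construction carrying over to pPTMs and pQTMs (this is how it obtains $\mathsf{BPP_{path}} \subseteq \mathsf{BPP_{CTC_1}}$ and $\mathsf{PP} \subseteq \mathsf{BQP_{CTC_1}}$), and your explicit check that the wrapper preserves the polynomial time bound is a reasonable filling-in of that implicit step.
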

\begin{corollary}
$\mathsf{BPP_{path}} = \mathsf{BPP_{CTC_1}}$, and $\mathsf{PP} = \mathsf{PostBQP} = \mathsf{BQP_{CTC_1}}$.
\end{corollary}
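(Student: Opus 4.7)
The plan is to derive the corollary as an immediate application of the preceding Theorem, specialized to the polynomial-time Turing machine cases, combined with one cited external result. For the classical half, I would instantiate the Theorem with $\mathrm{M} = \mathrm{pPTM}$: the Theorem then says that $\mathsf{BPP_{CTC_1}}$ coincides with the class of languages recognized by polynomial-time PTMs with postselection, and this is by definition $\mathsf{BPP_{path}}$. For the quantum half, I would instantiate with $\mathrm{M} = \mathrm{pQTM}$, which yields $\mathsf{BQP_{CTC_1}} = \mathsf{PostBQP}$ by the definition of $\mathsf{PostBQP}$ given in \cite{Aa05}. The remaining identity $\mathsf{PostBQP} = \mathsf{PP}$ is precisely Aaronson's theorem from \cite{Aa05}, which I would cite rather than reprove.

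Before declaring victory, I would verify that the constructions in Lemmas \ref{lemma:postsel} and \ref{lemma:selpos} actually go through in the pPTM and pQTM settings with no loss. In Lemma \ref{lemma:postsel} the simulator $\mathcal{A}'$ only adds a constant amount of bookkeeping at the end of the computation, so polynomial time is preserved. In Lemma \ref{lemma:selpos}, the construction builds $\mathcal{A}_{10}$ and $\mathcal{A}_{01}$ by sequencing two copies of $\mathcal{A}_0$ and $\mathcal{A}_1$ and then tossing a fair coin to choose between them; for pPTMs and pQTMs this is clearly still a polynomial-time machine, so the resulting $\mathcal{A}'$ is a legitimate $\mathsf{BPP_{path}}$ (respectively $\mathsf{PostBQP}$) machine. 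The postselection event $S_a \cup S_r$ has nonzero probability because the CTC bit's evolution matrix always admits a stationary distribution, which corresponds precisely to $p_{\mathcal{A}_0}^1(w) + p_{\mathcal{A}_1}^0(w) > 0$.

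The main obstacle, such as it is, is therefore purely definitional bookkeeping: matching the class $\mathsf{BPP_{CTC_1}}$ (respectively $\mathsf{BQP_{CTC_1}}$) from the paragraph immediately preceding the Theorem with the instantiation $\mathrm{M} = \mathrm{pPTM}$ (respectively $\mathrm{M} = \mathrm{pQTM}$), and matching $\mathsf{BPP_{path}}$ and $\mathsf{PostBQP}$ with the postselection versions used in the Theorem. Both matches are immediate from the definitions already in place, so the corollary is essentially a three-line deduction and no further technical work is required.
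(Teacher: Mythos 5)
Your proposal matches the paper's (implicit) derivation exactly: the corollary is obtained by instantiating the preceding Theorem with $\mathrm{M}=\mathrm{pPTM}$ and $\mathrm{M}=\mathrm{pQTM}$, unwinding the definitions of $\mathsf{BPP_{path}}$ and $\mathsf{PostBQP}$, and citing Aaronson's result $\mathsf{PostBQP}=\mathsf{PP}$ from \cite{Aa05}. Your extra sanity checks on the polynomial-time instantiations of Lemmas \ref{lemma:postsel} and \ref{lemma:selpos} are consistent with what the paper already establishes, so no further work is needed.
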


We can use Lemma \ref{lemma:postsel} to demonstrate the superiority of real-time PFAs and QFAs with narrow CTCs over their standard versions, which can only recognize regular languages with bounded error:  
For a given string $ w $, let $ |w|_{\sigma} $ denote the number of occurrences of symbol $ \sigma $ in $ w $. 
The nonregular language $ L_{eq} = \{ w \in \{a,b\}^{*} \mid |w|_{a} = |w|_{b} \} $ can be recognized by a PFA with postselection \cite{YS11B}. As for quantum machines, the language
$ L_{pal} = \{w \in \{a,b\}^{*} \mid w = w ^{r} \} $ is recognized by a QFA with postselection \cite{YS11B}. 
$ L_{eq} $ is known \cite{Fr81} to be recognizable by two-way PFAs at best in superpolynomial time \cite{GW86}, and the best known two-way QFA algorithm \cite{AW02} for $ L_{pal}$ has exponential expected runtime. Furthermore, $ L_{pal}$ is known \cite{DS92} to be unrecognizable by even two-way PFAs with bounded error, and no PFA with postselection can outperform a standard two-way PFA \cite{YS11B}, so we have established that finite-state quantum models with narrow CTCs outperform their probabilistic counterparts:
\begin{corollary}
	The class of languages recognized by $ QFA_{CTC_1} $s properly contains the class of languages recognized by $ PFA_{CTC_1} $s.
\end{corollary}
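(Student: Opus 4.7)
The plan is to split the corollary into containment and strictness, both of which follow by composing results already established in the paper. For containment, I would show that any $ PFA_{CTC_1} $ is simulable by a $ QFA_{CTC_1} $ via the textbook embedding of probabilistic into quantum computation: encode each classical state as a computational basis vector, and realize each column-stochastic transition as an incoherent quantum operation that reproduces the same distribution. This embedding is transparent to the CTC cell, because the $ 2 \times 2 $ column stochastic matrix that governs the CTC bit in the proof of Lemma \ref{lemma:selpos} depends only on the probabilities $ p_{\mathcal{A}_i}^j(w) $, and these are preserved exactly. The partition of states into $ S_n $, $ S_{p_0} $, $ S_{p_1} $ also carries over, so the embedded machine is a legitimate $ QFA_{CTC_1} $ recognizing the same language with the same error bound.

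For strictness, I would take the palindrome language $ L_{pal} = \{ w \in \{a,b\}^{*} \mid w = w^{r} \} $, already singled out in the paragraph preceding the corollary, as the separator. On the quantum side, $ L_{pal} $ is recognized by a QFA with postselection \cite{YS11B}, so Lemma \ref{lemma:postsel} places it inside the $ QFA_{CTC_1} $ class. On the probabilistic side, I would chain three facts from the surrounding discussion: the preceding Theorem equates $ PFA_{CTC_1} $ with real-time PFAs with postselection; real-time PFAs with postselection are no stronger than two-way PFAs \cite{YS11B}; and two-way PFAs cannot recognize $ L_{pal} $ with bounded error \cite{DS92}. Composing these three, $ L_{pal} $ lies outside the $ PFA_{CTC_1} $ class.

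The only delicate point, hardly an obstacle, is verifying that the bounded-error acceptance conventions line up across the cited references, so that the two-way PFA impossibility for $ L_{pal} $ transfers verbatim first to PFAs with postselection and then, via the Theorem, to $ PFA_{CTC_1} $s. Once this bookkeeping is checked, the corollary follows as a direct composition of the Theorem, Lemma \ref{lemma:postsel}, and the cited separations for $ L_{pal} $.
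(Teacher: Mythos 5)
Your proposal is correct and follows essentially the same route as the paper: the separation is witnessed by $L_{pal}$, placed inside the $QFA_{CTC_1}$ class via the postselection QFA of \cite{YS11B} and Lemma \ref{lemma:postsel}, and excluded from the $PFA_{CTC_1}$ class by chaining the equivalence theorem with the facts that postselection PFAs are no stronger than two-way PFAs \cite{YS11B} and that two-way PFAs cannot recognize $L_{pal}$ with bounded error \cite{DS92}. The only difference is that you spell out the (standard, and in the paper implicit) containment of $PFA_{CTC_1}$ in $QFA_{CTC_1}$ via the usual embedding of stochastic into quantum evolution, which is a harmless and indeed welcome addition.
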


\section{Weak deterministic models with narrow CTCs} \label{section:detctc}

We adapt the argument used in \cite{AW09} to prove that $ \mathsf{P_{CTC}} \subseteq \mathsf{PSPACE} $ to state the following upper bounds for the powers of deterministic machines with narrow CTCs:

\begin{theorem} \label{theorem:detub}
	Let $ \mathcal{A} $ be any machine of type $ \mathrm{M_{CTC_1}} $ that recognizes a language according to the 	definition in Section \ref{section:ourmodel}, where $ \mathrm{M} $ is a deterministic model. $ \mathcal{A} $ can be simulated by running two machines of type $ \mathrm{M} $ in succession.
\end{theorem}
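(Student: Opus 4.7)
The plan is to exploit the fact that $\mathcal{A}$ becomes fully deterministic once the value of its CTC cell is fixed. For each input $w$, let $b_0, b_1 \in \{0,1\}$ denote the bits that $\mathcal{A}$ sends back in time, and let $a_0, a_1$ denote the accept/reject decisions it takes, when run with the CTC cell initialized to $0$ and to $1$ respectively. Each pair $(b_i,a_i)$ is the entire output of a single execution of a standard type-$\mathrm{M}$ machine on $w$; these two executions, together with a constant-size postprocessor that combines the four recorded values into a final verdict, constitute the ``two machines of type $\mathrm{M}$ in succession'' called for by the theorem.

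The verdict itself comes from a four-case analysis on $(b_0,b_1)$, which fully specifies the $2 \times 2$ column-stochastic transition matrix governing the CTC cell. In the cases $(0,0)$ and $(1,1)$ the unique stationary distribution is the point mass on $0$ or on $1$, so the verdict is $a_0$ or $a_1$ respectively. In the identity case $(0,1)$ every distribution is stationary, so the definition in Section \ref{section:ourmodel} forces both $a_0$ and $a_1$ to agree with the overall verdict (and hence with each other). In the flip case $(1,0)$ the unique stationary distribution is $(1/2,1/2)$; since each $a_i$ corresponds to a deterministic acceptance probability of $0$ or $1$, the aggregate acceptance probability is the average of these two values, which attains $\tfrac{2}{3}$ exactly when $a_0 = a_1 = \mathrm{accept}$, and dually for rejection. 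In every case the verdict is therefore a fixed boolean function of $(b_0, a_0, b_1, a_1)$, obtained by a constant-time table lookup.

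The step I expect to be the main obstacle is giving the phrase ``in succession'' a clean meaning across all of the deterministic core models in the theorem's scope. For space-bounded deterministic Turing machines and for two-way DPDAs the two runs are executed literally one after the other, with the four bits $(b_0, a_0, b_1, a_1)$ held in $O(1)$ extra cells (or states) and combined by a short decision subroutine. For the one-way DFA case the product construction on the two fixed-CTC copies yields a single DFA that carries both branches along a single input pass. The real-time DPDA case is more delicate because one cannot run two DPDAs on the same input without either rewinding the input head or duplicating the stack; I would handle this by reading ``in succession'' as allowing the outer simulator a second scan of $w$, which is precisely the sort of ``two-wayness'' that the introduction already advertises as the additional capability conferred by a narrow CTC on real-time models.
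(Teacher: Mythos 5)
Your proposal is correct and follows essentially the same route as the paper: fix the CTC bit to each of its two values to obtain two deterministic runs, read the stationary distribution off the bits sent back, and use the $\frac{2}{3}$ error bound to force $a_0=a_1$ in the uniform-mixture case; the paper merely organizes this adaptively (report $a_0$ if branch $0$ returns $0$, otherwise run branch $1$ and report $a_1$, which covers both remaining stationary distributions at once) rather than as your four-entry table. The only point you leave implicit that the paper states outright is the assumption that neither fixed-bit run enters an infinite loop --- needed for $(b_i,a_i)$ to be defined at all in the two-way models --- which the paper defers to Section~\ref{section:feasibility}.
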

\begin{proof} Assume that $ \mathcal{A} $ never enters an infinite loop. (We  get rid of this assumption in Section \ref{section:feasibility}.) For a given input string $w$, let $f_{w}$ be the mapping among probability distributions over the CTC bit realized by  $ \mathcal{A} $ when running on $w$.  We have to find a distribution 
$ d $, such that $f_{w}(d)=(d)$, and see how $ \mathcal{A} $ responds to $w$ when it starts with the CTC bit having that distribution.

We first run a machine $ \mathcal{A}_{1} $ of type $ \mathrm{M} $ obtained by fixing the CTC bit of $ \mathcal{A} $ to 0 on input $w$. If this run ends at a state belonging to  $S_{p_0}$, the collection of $ \mathcal{A} $'s states that set the CTC bit to 0, we have found that $ \left( \begin{array}{c} 1 \\ 0 \end{array} \right) $ is a stationary distribution, and the response given to $w$ by 
$ \mathcal{A}_{1} $ is what $ \mathcal{A} $ itself would announce if it were executed. If this first stage ends within $S_{p_1}$, then we run another machine $ \mathcal{A}_{2} $ of type $ \mathrm{M} $, obtained by fixing the CTC bit of $ \mathcal{A} $ to 1, on input $w$. Note that the only remaining possibilities for stationary distributions at this stage are $ \left( \begin{array}{c} 0 \\ 1 \end{array} \right) $ and $ \left( \begin{array}{c} \frac{1}{2} \\ \frac{1}{2} \end{array} \right) $, and in either case, $ \mathcal{A}_{2} $'s response to $w$ is certain to be identical to $ \mathcal{A} $'s response, since $ \mathcal{A} $ cannot have an error probability as big as $\frac{1}{2}$.
\qed \end{proof}

This construction can be realized by a two-way version of model $ \mathrm{M} $. 
It is known \cite{Sh59} that two way DFA's are equivalent to their one-way versions.
\begin{corollary}
	One-way $ DFA_{CTC_1} $s recognize precisely the regular languages.
\end{corollary}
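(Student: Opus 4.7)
The plan is to prove containment in both directions. The easy inclusion is that every regular language is recognized by an ordinary one-way DFA, which can be trivially promoted to a $DFA_{CTC_1}$ whose end-of-input state always sends, say, a $0$ to the past and whose transitions ignore the CTC cell; the unique stationary distribution is irrelevant to acceptance, so the recognized language is unchanged.

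For the nontrivial inclusion, I would invoke Theorem \ref{theorem:detub} with $\mathrm{M}$ equal to DFA. The theorem tells us that a $DFA_{CTC_1}$ $\mathcal{A}$ is simulated by first running a DFA $\mathcal{A}_1$ (obtained by wiring the CTC cell to $0$) on $w$, inspecting whether its final state lies in $S_{p_0}$ or $S_{p_1}$, and either reporting $\mathcal{A}_1$'s decision or, in the latter case, running a second DFA $\mathcal{A}_2$ (obtained by wiring the CTC cell to $1$) on $w$ and reporting its decision. Note that the infinite-loop caveat of Theorem \ref{theorem:detub} is automatically discharged here, since on any finite input a one-way DFA halts after reading the end-marker.

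I would then argue that this sequential procedure is implementable by a two-way DFA $\mathcal{B}$ over the same alphabet: $\mathcal{B}$ sweeps the input once from left to right, simulating $\mathcal{A}_1$ in its finite control; upon reading the end-marker it records into its state the class ($S_{p_0}$ versus $S_{p_1}$) of $\mathcal{A}_1$'s final state together with that state's accept/reject status; if the class is $S_{p_0}$, $\mathcal{B}$ halts with $\mathcal{A}_1$'s answer; otherwise $\mathcal{B}$ rewinds to the left end-marker and sweeps again, now simulating $\mathcal{A}_2$, and halts with its answer. All auxiliary bookkeeping fits in a fixed, input-independent number of states, so $\mathcal{B}$ is a legitimate two-way DFA. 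By Shepherdson's theorem \cite{Sh59} every two-way DFA language is regular, so the language of $\mathcal{A}$ is regular, completing the proof.

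The only point requiring mild care is verifying that the case analysis on $\mathcal{A}$'s CTC evolution given in the proof of Theorem \ref{theorem:detub} really reduces to the two-pass procedure above; in particular, when $\mathcal{A}_1$ ends in $S_{p_1}$ one must check that all stationary distributions of the induced stochastic matrix yield the same response, which the theorem already guarantees by the $\frac{2}{3}$ error bound. I do not expect any genuine obstacle beyond this bookkeeping.
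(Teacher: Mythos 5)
Your proposal is correct and follows essentially the same route as the paper: the authors likewise derive the corollary by applying Theorem \ref{theorem:detub} to realize the two-pass simulation on a two-way DFA and then invoking Shepherdson's equivalence \cite{Sh59} of two-way and one-way DFAs. Your additional remarks (the trivial converse inclusion and the automatic discharge of the infinite-loop caveat for real-time DFAs) are accurate bookkeeping that the paper leaves implicit.
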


Two-way DPDAs are more powerful than one-way DPDAs \cite{GHI67}. 
Given a one-way $DPDA_{CTC_1}$ $ \mathcal{A} $, we can apply the idea of Theorem \ref{theorem:detub} to obtain three DPDAs as follows:
$ \mathcal{A}_{1} $ is obtained by fixing the CTC bit to 0, and accepting if computation ends in a member of $S_{p_0}$ that is also an accept state.
$ \mathcal{A}_{2} $ is obtained by fixing the CTC bit to 0, and accepting if computation ends in any member of $S_{p_1}$. 
$ \mathcal{A}_{3} $ is obtained by fixing the CTC bit to 1, with no change to the accept states of $ \mathcal{A} $.

Calling the languages recognized by these three machines $ L_1 $, $ L_2 $, and $ L_3 $, respectively, it is easy to see that the language recognized by $ \mathcal{A} $ is $ L_1 \cup (L_2 \cap L_3)$. We conclude that any language recognized by a one-way $DPDA_{CTC_1}$ can be expressed as the union of a deterministic context-free language (DCFL) with a language that is the intersection of two DCFLs.

To demonstrate that $DPDA_{CTC_1}$s are actually more powerful than ordinary DPDAs, we will show that the capability of sending a finite number of bits to the past endows a machine with the power of limited nondeterminism. 

The amount of nondeterminism used by a PDA can be quantified in the following manner \cite{He97}: The branching of a single move of a (nondeterministic) PDA is defined as the number of next configurations that are possible from the given configuration. The branching of a computation path of a PDA $ \mathcal{N} $ is the product of the branchings of all the moves in this path. The branching of a string $ w $ accepted by $ \mathcal{N} $ is the minimum of the branchings of the paths of $ \mathcal{N} $ that accept $w$. Finally, the branching of $ \mathcal{N} $ is the maximum of the branchings of the strings accepted by $ \mathcal{N} $. 

\begin{fact}
	\label{fact:brun}
	The class of languages recognized by PDAs with branching $ k $ is the class of unions of $ k $ DCFLs.
\end{fact}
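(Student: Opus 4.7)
The plan is to show both inclusions separately. The easy direction --- from a union of $k$ DCFLs to a branching-$k$ PDA --- is handled by a PDA $\mathcal{N}$ that performs a single initial nondeterministic move of branching exactly $k$ to guess an index $i \in \{1,\ldots,k\}$, and thereafter simulates the $i$-th given DPDA deterministically on the input. Every accepting computation of $\mathcal{N}$ consists of one branching-$k$ move followed by branching-$1$ moves, so its branching product is $k$, and $L(\mathcal{N})$ is manifestly the union of the given DCFLs.

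For the converse, start from a PDA $\mathcal{N}$ of branching $k$ and build $k$ DPDAs whose languages cover $L(\mathcal{N})$. The crucial observation is that, by definition, every accepted $w$ admits an accepting computation whose step-branchings $b_1, b_2, \ldots$ satisfy $\prod_j b_j \le k$, so the underlying choice-tree has at most $k$ leaves. For each $i \in \{1,\ldots,k\}$, I would construct a DPDA $\mathcal{D}_i$ that simulates $\mathcal{N}$ while tracking in its finite control a remaining budget $r$, initialized to $k$: at a configuration of branching $b \le r$, partition the label set $\{1,\ldots,r\}$ into $b$ near-equal buckets, take the transition indexed by the bucket containing $i$, and reset $r$ to the size of that bucket; at a configuration with $b > r$ (which cannot occur on a min-branching accepting path through this state) simply reject. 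Both $i$ and $r$ range over $\{1,\ldots,k\}$, so this extra bookkeeping fits into a finite state set.

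The main obstacle will be showing that every $w \in L(\mathcal{N})$ is accepted by at least one $\mathcal{D}_i$. I would argue this by induction on depth in the minimum-branching accepting choice-tree for $w$, using the invariant that the label set $\{1,\ldots,r\}$ at a node of branching $b$ is partitioned among the $b$ children into subsets whose sizes sum to $r$; because $\prod_j b_j \le k$, every leaf inherits a non-empty subset of labels, and any label $i$ in that subset makes $\mathcal{D}_i$ follow exactly that computation and terminate in the same accepting configuration as $\mathcal{N}$. The reverse inclusion $L(\mathcal{D}_i) \subseteq L(\mathcal{N})$ is immediate, since any accepting computation of $\mathcal{D}_i$ transcribes a legal accepting computation of $\mathcal{N}$. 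Determinism of $\mathcal{D}_i$ follows because the chosen next transition is a function of the input symbol, the simulated state and top-of-stack symbol, and the stored pair $(i,r)$.
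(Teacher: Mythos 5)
The paper offers no proof of this Fact; it is imported from the literature with a citation to \cite{He97}, so there is nothing in-paper to compare against and your argument must stand on its own. It essentially does: the forward direction (one initial branching-$k$ guess of which DPDA to simulate) is immediate, and in the converse direction your budget invariant is the right engine --- along an accepting path with step branchings $b_1,\ldots,b_m$ and $\prod_j b_j\le k$, the budget entering the $j$-th branching move remains at least $\prod_{l\ge j}b_l$, so the $b>r$ rejection is never triggered on that path and its leaf keeps a nonempty label interval. Three details deserve tightening. (i) As written the bookkeeping is inconsistent: $i$ is fixed in $\{1,\ldots,k\}$ but the label set is renormalized to $\{1,\ldots,r\}$ after each split, so ``the bucket containing $i$'' is undefined once $r<i$; you should track the actual subinterval of $\{1,\ldots,k\}$ (two endpoints, still finite state) or re-index $i$ within each bucket. (ii) The claim that \emph{every} leaf inherits a nonempty label set is false for the full computation tree --- paths whose branching product exceeds $k$ exhaust their budget --- but you only need, and by the invariant you have, nonemptiness at the leaf of the chosen low-branching accepting path. (iii) Determinism of each $\mathcal{D}_i$ is subtler than ``a function of the input symbol, state, stack top and $(i,r)$'': under the paper's definition the branching of a configuration counts $\varepsilon$-moves together with input-consuming moves and therefore depends on the next input symbol, so $\mathcal{D}_i$ must buffer one symbol of lookahead in its finite control to make its choice while remaining a legal DPDA, with the usual adjustment at end of input. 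None of these is fatal; with them repaired your argument is a correct, self-contained derivation of the fact the paper merely cites.
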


\begin{theorem} 
	\label{theorem:nondet}
	Any language that can be expressed as the union of two DCFLs can be recognized by a one-way $DPDA_{CTC_1}$.
\end{theorem}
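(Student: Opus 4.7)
The plan is to use the CTC bit as an oracle that guesses which of the two DCFLs the input belongs to, and then verify the guess. Let $L_1, L_2$ be DCFLs recognized by DPDAs $\mathcal{M}_1, \mathcal{M}_2$. I build a one-way $DPDA_{CTC_1}$ $\mathcal{A}$ whose state set contains two disjoint ``tracks'' indexed by the observed value $b$ of the CTC cell. At the first step $\mathcal{A}$ reads $b$ and jumps into track $b$, on which it deterministically simulates $\mathcal{M}_{b+1}$ on the input tape (including the stack behaviour). When the end-marker is reached, $\mathcal{A}$ enters a state of $S_{p_b}$ and designates it accepting if $\mathcal{M}_{b+1}$ accepts $w$; otherwise it enters a non-accepting state of $S_{p_{1-b}}$, sending back the opposite bit.

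The next step is a case analysis on membership of $w$ in $L_1, L_2$, computing in each case the $2\times 2$ column-stochastic transition matrix of the CTC bit induced by $\mathcal{A}$ and its stationary distributions. If $w\in L_1\cap L_2$ both tracks accept and each bit is a fixed point, so every distribution is stationary and $\mathcal{A}$ accepts with probability $1$. If $w\in L_1\setminus L_2$, track $0$ sends back $0$ and track $1$ sends back $0$, the unique stationary distribution is $(1,0)^T$, and under it $\mathcal{A}$ runs $\mathcal{M}_1$ and accepts; the case $w\in L_2\setminus L_1$ is symmetric with stationary $(0,1)^T$. If $w\notin L_1\cup L_2$, track $0$ sends back $1$ and track $1$ sends back $0$, so the transition matrix is the swap, the unique stationary distribution is $(\tfrac12,\tfrac12)^T$, but both tracks reject, so $\mathcal{A}$ rejects with probability $1$. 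In every case the acceptance/rejection probability is $1$, easily meeting the $\tfrac23$ bound of Section \ref{section:ourmodel}.

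The main thing to check carefully is that the construction legitimately lives in the $DPDA_{CTC_1}$ model. The two tracks give $\mathcal{A}$ a state set partitioned into $S_n\cup S_{p_0}\cup S_{p_1}$ as required; the states of $S_{p_0}\cup S_{p_1}$ are entered only when the end-marker is read; and given any fixed value of the CTC bit the transition function is single-valued, so $\mathcal{A}$ is deterministic in the required sense. The stack operations are inherited from $\mathcal{M}_1$ or $\mathcal{M}_2$ on the chosen track, and since each track uses its own copy of the corresponding DPDA, no interference occurs.

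The only mildly subtle point, and the one I expect to require the most care in writing, is formally handling the ``guess and verify'' at the level of the stationary-distribution semantics: one must verify that the \emph{unique} stationary distribution really forces the correct track to be simulated in the $L_1\triangle L_2$ cases, and that in the $w\notin L_1\cup L_2$ case the non-Dirac stationary distribution is harmless because both branches reject. Once that bookkeeping is spelled out, the theorem follows, and combined with Fact \ref{fact:brun} it yields that one-way $DPDA_{CTC_1}$s recognize at least all languages accepted by PDAs of branching $2$, strictly extending the class of DCFLs.
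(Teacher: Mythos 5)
Your proof is correct and uses essentially the same mechanism as the paper's: the CTC bit resolves a single binary choice between two deterministic branches, the machine sends back the name of the current branch on acceptance and the name of the other branch on rejection, and the resulting four-case stationary-distribution analysis (identity, constant-to-$0$, constant-to-$1$, swap) is exactly the paper's Table. The only cosmetic difference is that you work directly with the two DPDAs for $L_1$ and $L_2$ as two tracks selected at the first step, whereas the paper first invokes Fact \ref{fact:brun} to reduce to simulating a PDA of branching $2$; the substance is the same.
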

\begin{proof} 
By Fact \ref{fact:brun}, we only need to show how to build a $DPDA_{CTC_1}$ that can simulate a given PDA $ \mathcal{N} $ with branching 2. 

We convert $ \mathcal{N} $ to an equivalent PDA $ \mathcal{N'} $, all of whose computational paths have branching exactly 2, by modifying the program so that for any computational path of $ \mathcal{N} $ with branching greater than 2, $ \mathcal{N'} $ simply scans the input until the end, and rejects without performing that excess branching. For every nondeterministic state of $ \mathcal{N'} $, name the two outgoing branches 0 and 1. Convert $ \mathcal{N'} $ to a $DPDA_{CTC_1}$ $ \mathcal{A} $ which simulates $ \mathcal{N'} $, selecting the \textit{i}th branch if and only if it sees the value \textit{i} in the CTC bit. At the end of the input, $ \mathcal{A} $ sends the name of the current branch to the past if it is accepting the input. It sends the name of the other branch otherwise.

\begin{table}[h]
	\scriptsize
	\caption{Evolutions and stationary distributions of the CTC bit of $ \mathcal{A} $}
	\centering
	\begin{tabular}{|c|c|c|c|}
		\hline
		$ \mathsf{branch}_{0} $ & $ \mathsf{branch}_{1} $ & CTC transformation & Stationary distribution 
		\\ \hline
		Acc & Acc & $ \left( \begin{array}{cc} 1 & 0 \\ 0 & 1 \end{array} \right) $ & Any distribution
		\\ \hline
		Acc & Rej & $ \left( \begin{array}{cc} 1 & 1 \\ 0 & 0 \end{array} \right) $ & 
		$ \left( \begin{array}{c} 1  \\ 0 \end{array} \right) $
		\\ \hline
		Rej & Acc & $ \left( \begin{array}{cc} 0 & 0 \\ 1 & 1 \end{array} \right) $ & 
		$ \left( \begin{array}{c} 0 \\ 1 \end{array} \right) $
		\\ \hline
		Rej & Rej & $ \left( \begin{array}{cc} 0 & 1 \\ 1 & 0 \end{array} \right) $ & 
		$ \left( \begin{array}{c} \frac{1}{2} \\ \frac{1}{2} \end{array} \right) $
		\\ \hline
	\end{tabular}
	\label{tab:DPDA}
\end{table}

We consider the four possible cases corresponding to accept/reject responses of the two paths, and the associated evolutions of the CTC bit in Table \ref{tab:DPDA}. It is evident that $ \mathcal{A} $ recognizes the language of $ \mathcal{N} $ with zero error.
\qed\end{proof}

Since there exist languages (e.g. $\{a^{i}b^{j}c^{k} | i=j$ or $i=k\}$) that are not deterministic context-free, but which can be expressed as the union of two DCFLs, we conclude that the computation power of DPDAs is actually increased by the addition of a narrow CTC.

\section{Implementation of arbitrary negative delays using a constant-length channel} \label{section:feasibility}

In our model of computation with CTCs, the runtime, (that is, the length of the backward jump in time that the CTC bit will go through,) is not known at the start of the computation.  In the Aaronson-Watrous model, this ``range" is known at the end of the preprocessing stage prior to the point where the information arrives through the CTC, so one can view their program as building a mechanism that can transport a specified quantity of information with that fixed range, and then using it. Let us show that our setup can be implemented using a time machine of the Morris-Thorne-Yurtsever variety\footnote{See the brief description in Section \ref{section:Preliminaries}.} of constant range $ T $ seconds. 

Assume that all instructions in our programming language take an equal amount of time to be executed, and that a backward jump of $ T $ seconds amounts to a difference of $ k>3 $ instructions, in the sense that if the \textit{i}th instruction sends a bit backwards, that bit will be available to the (\textit{i-k})th, but not to the (\textit{i-k-1})th instruction.  

Consider the  trace of a computation conforming to our definition of Section \ref{section:ourmodel} in Figure \ref{fig:variable-length-channel}. In the figure, the first instruction $ r $ indicates the point where the CTC bit coming from the future is received, and $ s $ indicates the point where it is sent. (For simplicity,  assume that the program idles (performs a no-op) whenever an $ r $ ``instruction" is executed in this and the following figure.)
The idea in transforming this program to one which uses a fixed-length channel to the past is simply for the information to travel from the end of the execution back to its start in several ``hops," rather than one. For this purpose, we rewrite our programs so that they send back the current value of the CTC bit backwards once every $ k-1 $ instructions.  The modified program would then have a computation of the kind depicted in Figure \ref{fig:fixed-length-channel}.

\begin{figure}
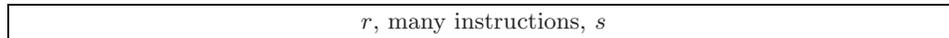

	\begin{center}
	\fbox{
		\begin{minipage}{\textwidth}
			\centering
			$ r $, many instructions, $ s $
		\end{minipage}
	}
	\end{center}
	\caption{Trace of a computation with a variable-length channel}
	\label{fig:variable-length-channel}
\end{figure}

\begin{figure}
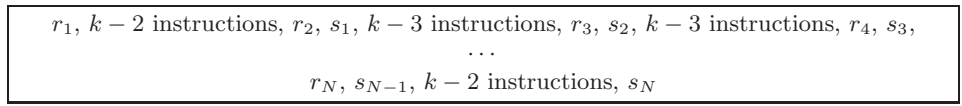

	\begin{center}
	\fbox{
		\begin{minipage}{\textwidth}
			\centering
			$ r_{1} $, $ k-2 $ instructions, $ r_{2} $, $ s_{1} $, $ k-3 $ instructions, 
			$ r_{3} $, $ s_{2} $, $ k-3 $ instructions, $ r_{4} $, $ s_{3} $, 
			\\ $ \cdots $ \\
			$ r_{N} $, $ s_{N-1} $,  $ k-2 $ instructions, $ s_{N} $
		\end{minipage}
	}
	\end{center}
	\caption{Trace of a computation with a fixed-length channel}
	\label{fig:fixed-length-channel}
\end{figure}

In Figure \ref{fig:fixed-length-channel}, the $ r_{i}  $-$ s_{i} $ pairs indicate the corresponding arrival and departure points of the CTC bit. The modification to be performed on the program is clearly achievable by a finite-state mechanism. Note that the first and last ``segments" (i.e. sequences of standard instructions, not involving time travel) of the execution are one step longer than the intermediate segments, since  we do not need send or receive to or from time points outside the duration of the execution. Also note that the  last segment has to be padded with no-ops if necessary\footnote{This necessity, and the inclusion of the intermediate $r$ and $s$ instructions that do not consume input symbols, mean that the resulting program is not real-time, although we can insert paddings in our language definitions of earlier sections to obtain sister languages that can be recognized in real-time by the approach of this section.} to make sure that $ r_{N} $ and $ s_{N} $ are separated with the proper length of time. 

We show that the fixed- vs. variable-length channel setups are equivalent, by demonstrating that the application of the causal consistency condition to all $ r_{i}  $-$ s_{i} $ pairs of Figure \ref{fig:fixed-length-channel} yields the same restriction on stationary distributions as the single application in Figure \ref{fig:variable-length-channel}.

We will assume, without loss of generality, that our original $M_{CTC_1}$ program consults its CTC bit only at the beginning of its computation. Note that this program can now be considered to consist of two branches $ b_{0} $ and $ b_{1} $, differentiated by the initial CTC value. Let $p^{i}$ denote the probability that a 0 will be sent back at the end of the execution, given that branch $ b_{i} $ has been selected at the beginning. It is important to see that these values are unchanged by a switch between the two setups.

Consider all pairs of the form $(r_i$-$s_i, r_{i+1}$-$s_{i+1})$, where $1<i<N$ in one of these two branches. Let the stationary distribution associated with the $r_i$-$s_i$ pair be $ \left( \begin{array}{c} p_i \\ 1-p_i \end{array} \right) $. This must equal the stationary distribution of the $r_{i+1}$-$s_{i+1}$ pair, since the value sent back by $s_i$ equals the one received by $r_{i+1}$: The evolution in the  $r_i$-$s_i$ pair is
\begin{equation}
	\left(
	\begin{array}{cc}
		p_{i+1} & p_{i+1} \\
		1-p_{i+1} & 1-p_{i+1}
	\end{array}
	\right),
\end{equation}
whose only stationary distribution is indeed $\left( \begin{array}{c} p_{i+1} \\ 1-p_{i+1} \end{array} \right) $.

Now consider the evolution in the $ r_1 $-$ s_1 $ pair. Using superscripts again to denote the probabilities associated with the two branches, this is
\begin{equation}
	\left(
	\begin{array}{cc}
		p^{0}_2 & p^{1}_2 \\
		1-p^{0}_2 & 1-p^{1}_2
	\end{array}
	\right),
\end{equation}
which, by the fact that $p^{0}_2=p^{0}_N$ and $p^{1}_2=p^{1}_N$, equals
\begin{equation}
	\left(
	\begin{array}{cc}
		p^{0}_N & p^{1}_N \\
		1-p^{0}_N & 1-p^{1}_N
	\end{array}
	\right).
\end{equation}

Since  $p^{0}_N$ and $p^{1}_N$ are exactly the probabilities of 0 being assigned to the CTC bit by the respective branches in the variable-length channel setup as well, the matrix above is precisely the evolution matrix induced by that program running on the same input. 

One useful consequence of this result is that it enables us to consider core models in which the machine can have multiple computation paths of different length in a single run. (Note that the models of Sections \ref{section:pfaqfactc} and \ref{section:detctc}  avoided this issue, since they were either restricted to real-time processing, or had sufficient resources to ensure that all paths have the same runtime, with any premature acceptances or rejections postponed until the scanning of the right input-end-marker.) In the Aaronson-Watrous model, this possibility can be handled by effectively making sure, by adding no-ops to short paths, that all paths have the same runtime, and building the CTC according to that specification. For the weak models we are considering, it is sufficient to view the CTC-assisted computation as being implemented with fixed-length channels. Note that the two branches of the computation need not be of equal length for the reasoning above to hold. This will help us characterize the power of certain models with two-way access to the input, as soon as we clarify the issue of the possibility of some branches entering infinite loops.  

What happens if a $ \mathrm{M_{CTC_1}}$ program (for a deterministic $ \mathcal{M} $) employing a fixed-length channel enters an infinite loop on, say, branch 0, never reaching accept or reject states? It is of course more realistic to assume that in this case the execution will finish, (possibly a very long time later), due to external reasons. In any case, no new value will be assigned to the CTC bit on that branch. The reasoning above about all the $p^{0}_{i}$ equaling each other for $i>1$ still applies, but since no assignment is performed within the last hop, consideration of the end of the branch does not help us to constrain this probability value. Assuming that branch 1 halts, sending 0 to the past with probability $p^{1}$, the evolution within the $ r_{1} $-$ s_{1} $ period is
\begin{equation}
	\left(
	\begin{array}{cc}
		1 & p^{1} \\
		0 & 1-p^{1}
	\end{array}
	\right).
\end{equation}
$ \left( \begin{array}{c} 1 \\ 0 \end{array} \right) $
is a stationary distribution of this matrix. The program neither accepts nor rejects when the CTC bit starts with this distribution, and therefore such programs, i.e. those with infinite loops, do not contribute to the class of languages recognized by $ \mathrm{M_{CTC_1}}$s. 
Since $ \mathrm{M_{CTC_1}}$ programs with finite-length branches can be handled using fixed-length channels, and simulated using the technique of Theorem \ref{theorem:detub}, we have

\begin{corollary}
For any space bound s, $\mathsf{SPACE}(s)_{\mathsf{CTC_1}}=\mathsf{SPACE}(s)$.
\end{corollary}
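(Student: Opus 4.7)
The containment $\mathsf{SPACE}(s) \subseteq \mathsf{SPACE}(s)_{\mathsf{CTC_1}}$ is immediate, since any standard $s(n)$-space deterministic Turing machine can be viewed as an $\mathrm{M_{CTC_1}}$ machine that ignores its CTC cell and sends some fixed bit back at the end. The real content is the reverse inclusion.

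The plan is to reuse the two-stage construction from Theorem \ref{theorem:detub} essentially verbatim. Given an $\mathrm{M_{CTC_1}}$ machine $M$ recognizing $L$ whose core model is a deterministic $s(n)$-space Turing machine, I would form $M_1$ and $M_2$ from $M$ by hard-wiring its CTC cell to $0$ and $1$, respectively; each is a standard deterministic $s(n)$-space machine. The simulator $M'$ first runs $M_1$ on $w$: if $M_1$ halts in a state of $S_{p_0}$, then $ \left( \begin{array}{c} 1 \\ 0 \end{array} \right) $ is a stationary distribution and $M'$ outputs $M_1$'s verdict. Otherwise $M'$ wipes its work tape and runs $M_2$ on $w$, outputting $M_2$'s verdict; Theorem \ref{theorem:detub} guarantees this matches $M$'s response under each of the remaining admissible stationary distributions, using that $M$ has error strictly below $\tfrac{1}{2}$. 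Since the two sub-runs are executed sequentially, they can share the same $s(n)$-cell work tape, so the overall simulation fits in $O(s(n))$ space, which collapses to $\mathsf{SPACE}(s)$ by the standard tape-compression theorem.

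The main obstacle, and the reason this is not already subsumed by Theorem \ref{theorem:detub}, is the ``no infinite loops'' hypothesis of that theorem. I would discharge it in two steps. First, the analysis at the end of Section \ref{section:feasibility} shows that any $\mathrm{M_{CTC_1}}$ program with a looping branch admits a stationary distribution under which it neither accepts nor rejects, and hence cannot recognize any language; so on every input $w$ where $M$ is actually required to produce a correct verdict, both $M_1$ and $M_2$ must halt. Second, $M'$ itself must never run forever. Since a deterministic $s(n)$-space machine has at most $2^{O(s(n))}$ distinct configurations, $M'$ attaches an $O(s(n))$-bit step counter to each sub-simulation and declares a branch to be nonhalting upon counter overflow; on inputs where $M$ is a correct recognizer no overflow can occur, while on all other inputs $M'$'s output is irrelevant to the language being recognized. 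The counter and finite bookkeeping add only $O(s(n))$ space, again absorbed by tape compression, yielding $M' \in \mathsf{SPACE}(s)$ and completing the argument.
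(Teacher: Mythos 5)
Your argument matches the paper's: the corollary is obtained by combining the two-machine sequential simulation of Theorem~\ref{theorem:detub} (with the work tape reused between the two runs) with the observation at the end of Section~\ref{section:feasibility} that a looping branch admits a stationary distribution under which the machine neither accepts nor rejects, so a correct recognizer has no looping branches on any input. Your added step counter is therefore superfluous (and for sublogarithmic $s$ it would actually require $O(s+\log n)$ bits to count configurations), but since it is never invoked on a machine that recognizes a language, the proof stands as essentially the paper's own.
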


\begin{corollary}
The computational power of two-way DPDAs does not change with the addition of a narrow CTC.
\end{corollary}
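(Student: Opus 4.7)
The plan is to specialize Theorem \ref{theorem:detub} to the case where $\mathrm{M}$ is a two-way DPDA, and then verify that the required simulation stays inside the two-way DPDA class. Given a two-way $\mathrm{DPDA_{CTC_1}}$ $\mathcal{A}$, I would first form the two ordinary two-way DPDAs $\mathcal{A}_1$ and $\mathcal{A}_2$ obtained by hard-wiring the CTC bit to $0$ and $1$ respectively. Then, following the recipe of Theorem \ref{theorem:detub}, I would construct a single two-way DPDA $\mathcal{B}$ that first simulates $\mathcal{A}_1$ on the input; if the simulation terminates in a state of $S_{p_0}$, $\mathcal{B}$ halts with $\mathcal{A}_1$'s accept/reject decision, while if it terminates in a state of $S_{p_1}$, $\mathcal{B}$ clears its stack, rewinds the input head to the left end-marker, and proceeds to simulate $\mathcal{A}_2$ on the same input, finally reporting $\mathcal{A}_2$'s decision. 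Since the input tape is read-only and the stack can be emptied with a simple dedicated subroutine, this kind of sequential composition with head reset keeps us within the class of two-way DPDAs. The reverse inclusion is immediate: any two-way DPDA is vacuously a two-way $\mathrm{DPDA_{CTC_1}}$ that never consults and never writes to the CTC bit.

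The main obstacle is the possibility that the simulation of $\mathcal{A}_1$ never terminates, so the hand-off to $\mathcal{A}_2$ never occurs. I would resolve this in two steps. First, I would invoke the analysis immediately preceding the $\mathsf{SPACE}(s)_{\mathsf{CTC_1}} = \mathsf{SPACE}(s)$ corollary: under the fixed-length-channel implementation, any branch of a deterministic $\mathrm{M_{CTC_1}}$ that enters an infinite loop forces the stationary distribution $\left(\begin{array}{c} 1 \\ 0 \end{array}\right)$ of the induced CTC evolution, and on such a distribution $\mathcal{A}$ neither accepts nor rejects, so the input is outside the recognized language and the simulation is free to behave arbitrarily. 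Second, I would rely on the classical fact that two-way DPDAs can be made to halt on all inputs without loss of language-recognition power, by using standard surface-configuration loop detection: at any given input position and stack height, only finitely many surface configurations are possible, and a repeat can be detected and converted into a rejecting halt. These two observations together let me assume that both $\mathcal{A}_1$ and $\mathcal{A}_2$ always halt, so the composition $\mathcal{B}$ above is well-defined.

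Putting this together yields a two-way DPDA recognizing the same language as $\mathcal{A}$, which, combined with the trivial reverse inclusion, gives the corollary.
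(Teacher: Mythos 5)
Your proposal follows essentially the same route as the paper: Theorem \ref{theorem:detub}'s two-stage simulation, realized by sequential composition (stack clearing and head reset) inside the two-way DPDA class, combined with the Section \ref{section:feasibility} analysis of infinite loops. One small correction: under the paper's definition a looping branch produces an input that is neither accepted nor rejected, so the program recognizes no language at all (rather than the input merely being a nonmember of the language); this already lets you assume both branches halt, which makes your second step --- whose loop-detection claim for two-way DPDAs is in any case not as routine as you suggest, since the stack is unbounded --- unnecessary.
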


\section{Concluding remarks and open questions} \label{section:Conclusion}

We have examined the  power of several computational models augmented by the capability of sending a single classical bit to the past. We have characterized the classes $\mathsf{BPP_{CTC_1}}$ and $\mathsf{BQP_{CTC_1}}$ in terms of classical conventional classes, and shown that real-time probabilistic and quantum finite automata, as well as deterministic pushdown automata,  benefit from narrow CTCs. In Section \ref{section:feasibility}, we establish that CTCs remain useful even if the information channel to the past has a small fixed range, e.g. a few seconds, and that narrow CTCs do not change the power of deterministic models with two-way access to the input string. One-way deterministic finite automata are also not affected. 

In an earlier paper \cite{YS11B}, we had shown that machines with postselection have precisely the same power as conventional machines that are able to reset their input head to the start of the input string and switch to the initial state, to restart the computation all over again \cite{YS10B}. The new link to narrow CTCs shows that postselection is indeed a profoundly interesting concept that requires further investigation.

Some open questions remain. 
Can we pin down the power of $DPDA_{CTC_1}$s further than we have done in Theorems \ref{theorem:detub} and \ref{theorem:nondet}?
Most of our results are obviously generalizable to CTCs with capacities of $k>1$ bits. With more CTC bits, one can clearly implement more nondeterministic choices. With more nondeterminism, one can obtain more succinct finite automata for certain languages, and build PDAs with superior language recognition capability \cite{He97}. It would be interesting to clarify and quantify these relationships.

In most of our programs, for instance, the machines of Lemma \ref{lemma:postsel}, the machine's decision about the input can be read off the CTC bit immediately after the start of the execution with high probability, without the need to wait for the whole input string to be scanned. What stops us from ``cheating," turning the computer off and using the output ``for free," at that point? The answer, according to the analysis in \cite{De91}, is that the CTC bit will be set to the desired disribution only if nothing interferes with the complete scheduled run of the machine, and its value will be ``corrupted" if, say, a meteorite is likely to hit the computer in the middle of the computation, or if the molecules in our brains are configured so that we are likely to cheat. This requirement of perfect isolation from external interference during the computation is yet another interesting aspect of CTC-assisted computation.
\section*{Acknowledgements} \label{section:Acknowledgements}

We thank David Deutsch, Scott Aaronson, Amos Ori, and Taylan Cemgil for their helpful answers to our questions.

\bibliographystyle{alpha}
\bibliography{YakaryilmazSay}

\end{document}